\newtheorem{theorem}{Theorem}
\newtheorem{lemma}{Lemma}
\theoremstyle{remark}
\newtheorem{remark}{Remark}
\theoremstyle{definition}
\newtheorem{definition}{Definition}
\setlist[enumerate]{noitemsep,label=(\arabic*)}
\title{An algorithm for generating random mixed-arity~trees}
\author{Aleksander Kiryk}
\date{\today}
\begin{document}
\maketitle

\begin{abstract}
	Inspired by \cite{korsh} we present a new algorithm for uniformly random generation of ordered trees in which all occuring outdegrees can be specified by a given sequence of numbers. The method can be used for random generation of binary or n-ary trees, or ones with various arities. We show that the algorithm is correct and has $O(n)$ time complexity for $n$ being the desired number of nodes in the resulting tree. In the discussion part we show how some selected formulas can be derived with the use of ideas developed in the proof of correctness of the algorithm.
\end{abstract}

\section{Introduction}
Trees are data structures commonly used in mathematics and computer science. The problem of generating them in a uniformly random way has been already well studied \cite{comparison}, resulting in methods that have potential applications in areas like software testing, machine learning and statistics where unbiased sampling is usually desired. The algorithms generating random trees can be characterized by the types of constraints they accept. These constraints usually define a class of objects that we want to sample giving each of these objects an equal chance to be selected. Algorithms are known that randomly select objects from the following classes of trees:
\begin{enumerate}
	\item unordered trees restricted by an expected number of nodes \cite{comparison},
	\item ordered trees restricted by an expected number of nodes \cite{ordered},
	\item ordered trees restricted by an expected number of nodes and their degree \cite{korsh-kary},
	\item binary trees restricted by an expected number of nodes \cite{korsh} \cite{remy},
	\item and some more.
\end{enumerate}

In this article we present a method for random selection of trees from a class restricted by a sequence of all outdegrees that occur in the tree. The method is heavily inspired by a binary tree generating algorithm proposed by Korsh \cite{korsh}. Here we use a different encoding in order to represent nodes with various outdegrees. We also show that most of the desired properties of the Korsh method still hold.

It is worth noting that if the outdegrees are interpreted as arities, the algorithm can be used for generating random syntactic trees for arithmetical and logical expressions.

\section{The algorithm}
The input array $A$ of non-negative integers is expected to contain outdegrees of all the nodes that will be present in the resulting tree. As explained in the section \ref{correctness}, the constraints enforced by the contents of $A$ must be realistic, i.e. at least one tree that meets the requirements must exist. The lemma \ref{nanda} provides simple means to examine if it does.

The convention used here is that the arrays are indexed starting with $1$ and the notation $A[a..b]$ represents a subarray of $A$ starting from the element under index $a$ and ending with the element under the index $b$, both inclusive. The operator $+$ used with arrays represents concatenation.

\begin{algorithm}[H]
	\caption{\textsc{Get-Random-Tree}$(A)$}
	\SetSideCommentLeft
	\LinesNumbered
	\DontPrintSemicolon
	$A \gets \textsc{Shuffle}(A)$\;\label{shuffle}
	$n \gets 0$\;
	$k \gets 0$\;
	\For{$i \gets 1$ \KwTo $|A|$\label{search}}{%
		\textit{$n \gets n + 1 - A[i]$}\;
		\If{n = 1}{%
			$n \gets 0$\;
			$k \gets i$\;
		}
	}
	\KwRet{$A[k+1..|A|] + A[1..k]$}\;\label{rotation}
\end{algorithm}

The result will be encoded in a prefix form. For instance, for an input $A=[0,0,0,0,1,2,3]$ a potential result $[3,1,0,2,0,0,0]$ represents the following tree:

\begin{figure}[H]
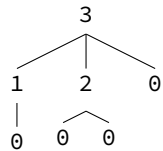

	\Tree[ .\texttt{3} [ .\texttt{1} \texttt{0} ] [ .\texttt{2} \texttt{0} \texttt{0} ] \texttt{0} ]

	\caption{The tree represented by $[3,1,0,2,0,0,0]$}
\end{figure}

The algorithm contains three components that are usually implemented as loops, which are random shuffle (line \ref{shuffle}), search for the point of rotation (line \ref{search}) and the rotation itself (line \ref{rotation}), the other parts of the algorithm can be assumed to be performed in constant time. The search and rotation are linear, procedures that perform random shuffle in linear time are also known \cite{fisher-yates}, so the overall complexity of the algorithm presented here is $O(n)$ where $n=|A|$.

\section{Correctness}\label{correctness}
\begin{definition}
	Let $C(V) := \sum_{v\in{V}}\left(1-\deg^+(v)\right)$ denote the number of elements of a set of nodes $V$ minus the sum of their outdegrees.
\end{definition}

\begin{lemma}\label{nanda}
	Let $V$ be a finite set of nodes. A tree can be constructed with a use of all elements of $V$ iff $C(V) = 1$.
\end{lemma}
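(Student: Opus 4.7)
The statement has two directions. I would prove the forward direction ($\Rightarrow$) by a quick counting argument: a tree on $n = |V|$ nodes has exactly $n-1$ edges, since every non-root node has a unique parent, and the total number of edges equals $\sum_{v \in V} \deg^+(v)$. Hence $C(V) = n - (n-1) = 1$.

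The converse ($\Leftarrow$) is the substantive direction; I would prove it by strong induction on $|V|$. The base case $|V| = 1$ is immediate, since $C(V) = 1$ forces the single node to have out-degree $0$, which is itself a valid (trivial) tree. For $|V| > 1$, at least one node must have positive out-degree, since otherwise $C(V) = |V| > 1$. Pick any such node $v$ with out-degree $d \geq 1$ and declare it the root. The remaining set $V' := V \setminus \{v\}$ satisfies $C(V') = C(V) - (1 - d) = d$. If $V'$ can be partitioned into $d$ blocks each with $C$-value equal to $1$, then the inductive hypothesis yields $d$ subtrees that can be hung below $v$ to form a tree on $V$.

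The core sub-claim is therefore purely combinatorial and independent of any tree structure: \emph{if a finite multiset $W$ of non-negative integers satisfies $C(W) = k \geq 1$, then $W$ can be partitioned into $k$ blocks each with $C = 1$.} I would prove this by induction on $k$ using a prefix-sum argument that mirrors the point-of-rotation search on line \ref{search} of the algorithm. List the elements of $W$ in any order as $w_1, \dots, w_m$, set $S_j = \sum_{i=1}^{j}(1 - \deg^+(w_i))$, and observe that $S_0 = 0$, $S_m = k$, and each increment is at most $1$. Hence there is a least index $j^\ast$ with $S_{j^\ast} = 1$. If $k = 1$, take $W$ itself; otherwise $j^\ast < m$, and peeling off $\{w_1, \dots, w_{j^\ast}\}$ leaves a multiset of $C$-value $k-1$, finishing the partition by the inner induction.

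I expect the main obstacle to be the sub-claim rather than the outer induction: the temptation is to phrase it in tree-language and accidentally invoke the very statement being proved. Keeping it stated for arbitrary multisets of non-negative integers, and only afterwards applying the outer induction to each block, avoids that circularity. A pleasant byproduct is that the proof foreshadows why the algorithm's single-pass scan succeeds: the same prefix argument guarantees that a valid cut point always exists.
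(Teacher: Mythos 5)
Your proof is correct, but it takes a genuinely different route from the paper in both directions. For the forward implication the paper runs an induction over subtrees, computing $C(V-\{v\})=\deg^+(v)$ for the root $v$; your edge-counting argument ($n-1$ edges, each contributed by exactly one parent, so $\sum_v \deg^+(v)=n-1$) reaches the same conclusion without induction and is arguably the cleaner of the two. For the converse the paper argues bottom-up: it takes a node $v$ of \emph{maximal} outdegree, shows $V$ must contain at least $\deg^+(v)$ leaves, attaches those leaves to $v$, contracts the resulting small tree to a single fresh leaf, and recurses on a node set smaller by $\deg^+(v)$. You argue top-down: remove an arbitrary node of positive outdegree $d$ as the root, observe the remainder has $C$-value $d$, and invoke a separate combinatorial sub-claim that any multiset of outdegrees with $C=k\ge 1$ splits into $k$ blocks of $C$-value $1$, proved by the prefix-sum/first-passage argument. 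Your decomposition buys two things: the sub-claim isolates exactly the "discrete intermediate value" fact that also underlies the algorithm's cut-point search (the paper never makes this connection explicit at the level of Lemma \ref{nanda}), and your induction stays entirely within subsets of $V$, whereas the paper's contraction step introduces a synthetic leaf $t\notin V$ and so is really an induction over abstract outdegree multisets rather than over $V$ itself. The paper's version is shorter and constructs the tree from the leaves up, which some readers may find more concrete. Both are sound; just make sure, when writing yours out in full, to note explicitly that each block in the partition is nonempty (immediate, since $C(\emptyset)=0\ne 1$) so that the strong induction hypothesis applies to it.
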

\begin{proof}
	Let us use induction to prove the forward part. Let $t$ be a tree with a single node, obviously if $V$ contains only that node, then $C(V)=1$.

	Now let $t$ be a larger tree, $V$ be the set of its nodes and $v\in{V}$ be the root of $t$. We assume that the statement holds for trees smaller than $t$. We know that $C(V-\{v\})=\deg^+(v)$ because $v$ has $\deg^+(v)$ subtrees and the statement holds for all of them. Since $V$ is a union of $\{v\}$ and $V-\{v\}$ we can see that \[C(V)=C(V-\{v\})+1-\deg^+(v)=1\] and so the statement holds also for $V$.

	In the backwards part, the proof also goes by induction. If $V$ is a minimal set which meets the sufficient condition, it contains only a single leaf node. In this case $C(V)=1$, so the statement holds.

	Now we are going to show that for $|V|>1$ if the statement holds for all sets smaller than $V$, then it also holds for $V$. Let $v$ be an element of $V$ with a maximal outdegree. Since $C(V) = 1$ and leaves are the only elements of $V$ that increase the value of $C(V)$, the set $V$ must contain at least $\deg^+(v)$ leaves, otherwise $C(V)$ would not be a positive number. Let us create a tree $t$ with $v$ as the root and the $\deg^+(v)$ leaves as its leaves. Now we will treat $t$ as a single leaf node and define a set $W$ which consists of $t$ and the elements of $V$ that were not used for $t$. $W$ contains $\deg^+(v)$ less nodes than $V$ and a sum of arities lowered by $\deg^+(v)$, so $C(W)$ can be calculated as \[C(W)=C(V)-\deg^+(v)+\deg^+(v)=1\] and since $W$ is a smaller set and a tree for it can be trivially converted into a tree for $V$, the inductive step is established.
\end{proof}

\begin{remark}\label{sequence}
	Later we will use $C(V)$ in an analogous way also for sequences.
\end{remark}

\begin{definition}\label{np}
	A well-formed \emph{expression in Polish notation} (or shortly an \emph{expression}) is either a symbol representing a variable or a constant, or it is a symbol representing an operator concatenated by expressions in a number equal the arity of that operator.
\end{definition}

\begin{remark}\label{hunger}
	It is worth noting that in Polish notation the only situation in which an expression is not well-formed is when some of its operators are followed by too few subexpressions. If it is followed by too many of them, then the whole string is not well-formed, but it has a prefix which is.
\end{remark}

In this article we will use Polish notation for encoding rooted trees, where the leaves are represented by the \texttt{0} constant and the other nodes are operators with the operands being their children. We denote the nodes as digits that correspond to their outdegree.

\begin{definition}
	If a string $uw$ has a postfix $w$ of length $k$, then its \emph{$k$-rotation} is the string $wu$.
\end{definition}

\begin{lemma}\label{equality}
	For a well-formed expression of length $n$, each of its $n-1$ non-identity rotations is not well-formed.
\end{lemma}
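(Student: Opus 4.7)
The approach is to track running partial sums of $1 - \deg^+$ along the string, using $C$ extended to sequences as in Remark \ref{sequence}. For a sequence $w$ of length $n$, let $s_i(w)$ denote the value of $C$ on the first $i$ symbols of $w$. The plan is to first establish a prefix-level characterization: a string $w$ is a well-formed expression if and only if $s_n(w) = 1$ and $s_i(w) \leq 0$ for every $i$ with $0 \leq i < n$. I would prove the forward direction by induction on the recursive structure of Definition \ref{np}; for an arity-$k$ operator followed by $k$ well-formed subexpressions, the partial sum begins at $1 - k$ and rises by exactly one each time a subexpression is completed, ending at $1$, while within each subexpression it stays at or below the running ``waypoint'' by the inductive hypothesis. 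The converse can be handled analogously by peeling off a well-formed prefix and iterating.

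For the lemma itself, fix $k$ with $1 \leq k \leq n-1$ and decompose the original well-formed expression as $uw$ with $|u| = k$; its $k$-rotation is then $wu$. The partial sum of $wu$ after its first $|w| = n - k$ symbols equals $s_n(uw) - s_k(uw) = 1 - s_k(uw)$, which by the characterization applied to $uw$ is at least $1$, since $s_k(uw) \leq 0$.

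Because every arity is non-negative, each increment $1 - \deg^+$ is at most $1$, so the partial-sum sequence of $wu$ cannot jump over any integer. Starting from $0$ and reaching at least $1$ by position $n-k$, it must pass through exactly $1$ at some smallest position $j \leq n - k < n$; the first $j$ symbols of $wu$ then satisfy the characterization and form a well-formed proper prefix of $wu$. By Remark \ref{hunger}, the whole string $wu$ cannot itself be well-formed.

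The main obstacle is establishing the prefix characterization in the first paragraph, which requires careful bookkeeping across the concatenation in Definition \ref{np}; once that is in place, the rest of the argument is a short partial-sum calculation in the style of the classical cycle lemma.
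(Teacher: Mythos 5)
Your argument is correct, but it takes a genuinely different and more quantitative route than the paper. The paper disposes of this lemma in one sentence of structural reasoning: since every operator must be followed by exactly the right number of subexpressions, moving a postfix to the front necessarily starves some operator of operands. You instead build the standard cycle-lemma machinery: the characterization ``$w$ of length $n$ is well-formed iff $s_n(w)=1$ and $s_i(w)\le 0$ for $0\le i<n$'' (which one checks on the paper's example $[3,1,0,2,0,0,0]$, whose partial sums are $-2,-2,-1,-2,-1,0,1$), and then the observation that the rotation $wu$ has partial sum $1-s_k(uw)\ge 1$ at the interior position $n-k$. Two remarks on economy. First, for this lemma you only need the \emph{forward} direction of your characterization (well-formed $\Rightarrow$ interior partial sums $\le 0$): once $s_{n-k}(wu)\ge 1>0$ at an interior position, $wu$ violates that necessary condition and you are done; the converse, and hence your looser ``peel off a prefix and iterate'' sketch, is not needed here (though it would be needed to make Lemma \ref{correction} fully rigorous in the same style). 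Second, your closing appeal to Remark \ref{hunger} is a slight detour: the remark as stated describes what failure looks like rather than asserting that a string with a well-formed proper prefix cannot itself be well-formed; that prefix property is exactly what your characterization already gives you directly, so you should cite the characterization rather than the remark. What your approach buys is rigor and reusability --- the same partial-sum bookkeeping makes Lemma \ref{correction} and the $\frac{1}{n}$ counting argument in the Discussion precise --- at the cost of having to prove the characterization by the careful induction you correctly identify as the main obstacle.
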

\begin{proof}
	By definition \ref{np}, every operator in a well-formed expression must be followed by an exact number of subexpressions, so rotating a postfix of a string to its beginning has to leave at least one of the operators without some of its operands.
\end{proof}

\begin{lemma}\label{correction}
	Every string that is not a well-formed expression, but could be reordered into a well-formed expression is a rotation of a well-formed expression.
\end{lemma}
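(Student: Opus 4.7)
The plan is to recast well-formedness as a condition on partial sums and then select the unique rotation satisfying it. Let $x_i = 1 - \deg^+(s_i)$ and $c_i = x_1 + \cdots + x_i$, with $c_0 = 0$. Since $C$ is invariant under reordering (Remark \ref{sequence}), the hypothesis that $s$ rearranges into a well-formed expression combined with Lemma \ref{nanda} yields $c_n = C(s) = 1$.

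Before rotating, I would establish a partial-sum characterization of well-formedness in Polish notation: a string is a well-formed expression iff $c_n = 1$ and $c_i \leq 0$ for every $0 < i < n$. The forward direction is a short induction on Definition \ref{np}; the reverse is essentially Remark \ref{hunger}, since any earlier $c_i = 1$ would mean that the prefix $s_1 \cdots s_i$ is a complete expression with surplus material trailing it. Granting this, the failure of $s$ to be well-formed forces some $0 < i < n$ with $c_i \geq 1$, so $M := \max_{0 \leq i \leq n} c_i \geq 1 > c_0$.

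The heart of the proof is picking the right rotation. Let $k$ be the \emph{smallest} index in $\{1, \ldots, n\}$ at which $c_k = M$; one gets $k \leq n-1$ automatically, because $k = n$ would force $c_i < M = 1$ (and so $c_i \leq 0$) for every $i < n$, making $s$ itself already well-formed. Form the $(n-k)$-rotation $s' = s_{k+1}\cdots s_n s_1 \cdots s_k$, whose partial sums are $c'_j = c_{k+j} - c_k$ for $j \leq n-k$ and $c'_j = 1 + c_{j-(n-k)} - c_k$ for $j > n-k$. In the first range $c_{k+j} \leq M$ by maximality, so $c'_j \leq 0$; in the second, the argument index $j-(n-k)$ is strictly below $k$, and because $k$ is the \emph{smallest} index with $c = M$ we have $c_{j-(n-k)} < M$, hence by integrality $c'_j \leq 0$ as well. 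Finally $c'_n = 1$, so $s'$ satisfies the characterization above and is well-formed, exhibiting $s$ as its $k$-rotation.

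The step requiring the most care will be this index accounting, particularly aligning the weak and strict inequalities: choosing $k$ to be the smallest (rather than largest) attainer of $M$ is exactly what upgrades the weak bound $c_i \leq M$ to the strict bound needed for indices below $k$. A small ancillary task, not fully spelled out in the excerpt, is verifying the partial-sum characterization; once that is in hand, combining the construction above with Lemma \ref{equality} also gives uniqueness of the correct rotation for free.
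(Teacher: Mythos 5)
Your proof is correct, but it takes a genuinely different route from the paper's. The paper argues structurally: it writes the string as $uw$ where $u$ is the maximal prefix consisting of disjoint complete expressions and $w$ is the incomplete tail, computes $C(w)=1-h$ and $C(u)=h$, concludes that $u$ holds exactly $h$ complete expressions --- precisely the operands $w$ is missing --- and so $wu$ is well-formed. You instead work entirely with the partial sums $c_i$ of $x_i=1-\deg^+(s_i)$, prove the {\L}ukasiewicz-type characterization ($c_n=1$ and $c_i\leq 0$ for $0<i<n$), and rotate at the first index attaining $\max_i c_i$; your index accounting (weak bound from maximality on one side, strict bound from minimality of $k$ plus integrality on the other) is sound, and the two constructions in fact select the same cut point, since the first maximum of the partial sums sits exactly at the end of the paper's prefix $u$. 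What your version buys: it is closer to the classical cycle-lemma argument, it makes the uniqueness of the correct rotation fall out of the same computation (rather than only from Lemma \ref{equality}), and it mirrors more transparently what the algorithm's loop actually computes. What it costs: you must first establish the partial-sum characterization of well-formedness, which the paper sidesteps by leaning directly on Definition \ref{np} and Remark \ref{hunger}; your sketch of that characterization (induction for one direction, Remark \ref{hunger} for the other) is the right idea, but you correctly flag that it needs to be written out before the argument is complete.
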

\begin{proof}
	First let us denote such a string by $uw$ and notice it always begins with a prefix $u$ containing disjoint well-formed expressions, and only the postfix $w$ that follows them is a single expression that is not well-formed. It is a direct conclusion of the remark \ref{hunger}.

	Now let us analyze $w$. First we define $1-h := C(w)$. We know that $1-h < 1$ as the operators of $w$ lack operands to become a correct expression. We could fix $w$ if we placed $h$ well-formed expressions after it. Now let us notice that $C(u)$ must equal $h$ as the assumption on the whole string $uw$ is that we can build a single well-formed expression out of its symbols meaning that $C(uw)=1$. We have previously shown that $w$ is the only expression in $uw$ that lacks operands, so it is granted that $u$ contains exactly $h$ correct expressions, therefore the $|w|$-rotation of $uw$ is a well-formed expression.
\end{proof}

\begin{theorem}
	Given a string $w$ that can become a well-formed expression representing a tree by having its characters rearranged, we can select one of these representations in a uniformly random way by first randomly selecting one of permutations of $w$, and then fixing it by an appropriate rotation.
\end{theorem}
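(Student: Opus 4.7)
The plan is to show that the map $\phi$ sending a permutation of $w$ to its unique well-formed rotation has fibres of constant size, so that composing the uniform shuffle with $\phi$ produces a uniform distribution on the set of well-formed expressions built from the symbols of $w$.

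First, I would observe that $\phi$ is well-defined: by Lemma \ref{correction} every permutation of $w$ admits a well-formed rotation, and by Lemma \ref{equality} this rotation is unique. Denote by $P$ the set of distinct permutations of $w$ and by $T \subseteq P$ the subset of those that are well-formed, so that $\phi \colon P \to T$.

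Next I would count $|\phi^{-1}(t)|$ for each $t \in T$. By the uniqueness part of $\phi$ the preimage is precisely the set of rotations of $t$, so it remains to argue that the $n$ rotations of $t$ are pairwise distinct as strings. I would proceed by contradiction: if $t$ had period $d < n$, then the non-identity $d$-rotation of $t$ would coincide with $t$ itself and hence be well-formed, contradicting Lemma \ref{equality}. Thus the period of $t$ equals $n$, and $|\phi^{-1}(t)| = n$ independently of $t$.

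Since the shuffle produces a uniform element of $P$ and $\phi$ has constant fibre size, the algorithm outputs each element of $T$ with probability $n/|P|$, establishing uniformity. The only delicate step I anticipate is the period argument used to guarantee that the $n$ rotations of $t$ are distinct; without it, a well-formed expression with non-trivial rotational symmetry would be undercounted and the conclusion would break. Everything else follows directly from Lemmas \ref{correction} and \ref{equality}.
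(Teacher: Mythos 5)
Your argument is correct and follows the same route as the paper: Lemma \ref{correction} gives existence and Lemma \ref{equality} gives uniqueness of the well-formed rotation of each permutation, and the constant fibre size of the resulting map from permutations to well-formed expressions yields uniformity. The one place you go beyond the paper's own proof is the period argument showing that the $n$ rotations of a well-formed expression are pairwise distinct strings---the paper silently assumes this when it counts $n-1$ incorrect rotations per expression---so your explicit check is a correct and worthwhile tightening rather than a different approach.
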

\begin{proof}
	It follows from the lemma \ref{correction} that every incorrect expression in the set of permutations of $w$ is a rotation of a well-formed one, and lemma \ref{equality} guarantees that for every well-formed expression there is the same number of its incorrect rotations. Considering the above, randomly choosing a permutation of $w$ and fixing it, using the method from lemma \ref{correction}, guarantees uniformly random selection of a tree.
\end{proof}

\section{Discussion}
The methods used in the proof of correcntess allow us to make some side notions. For instance, every well-formed expression of length $n$ has $n-1$ incorrect rotations, and every incorrect expression is a rotation of a well-formed one. Which means the chances for a randomly reordered well-formed expression being well-formed too are $1-\frac{n-1}{n}=\frac{1}{n}$.

This, in turn, can be used in derivation of the Catalan numbers. Given a set $V$ of $n$ nodes of outdegree two, we can calculate the number $l$ of leaves that we need in order to build a proper tree using them.
\begin{align*}
	C(V) + l &= 1 \\
	n-2n + l &= 1 \\
	l &= 1 + n
\end{align*}
Now since all these nodes can be ordered in $(n+l)!=(2n+1)!$ ways, then the number of permutations that are pairwise different can be calculated by $\frac{(2n+1)!}{n!(n+1)!}$, but only $\frac{1}{(2n+1)}$ of them is correct, giving us $\frac{(2n)!}{n!(n+1)!}$ binary trees of the size $n$.

\printbibliography
\end{document}